\documentclass[11pt]{article}

\usepackage[a4paper,margin=30mm]{geometry}
\usepackage[T1]{fontenc}
\usepackage{lmodern}
\usepackage{microtype}
\usepackage{amsmath,amssymb,amsthm,mathtools}
\usepackage[hidelinks]{hyperref}
\hypersetup{
  pdftitle={Metricity of separable quantum transport with the Hilbert--Schmidt cost},
  pdfauthor={Tomasz Miller}
}

\theoremstyle{plain}
\newtheorem{theorem}{Theorem}[section]
\newtheorem{lemma}[theorem]{Lemma}
\newtheorem{proposition}[theorem]{Proposition}
\newtheorem{corollary}[theorem]{Corollary}
\theoremstyle{remark}
\newtheorem{remark}[theorem]{Remark}

\newcommand{\RE}{\operatorname{Re}}
\newcommand{\IM}{\operatorname{Im}}
\newcommand{\Herm}{\operatorname{Herm}}
\newcommand{\ext}{\operatorname{ext}}
\newcommand{\Tr}{\operatorname{Tr}}

\newcommand{\Gammasep}{\Gamma_{\mathrm{sep}}}
\newcommand{\dsep}{d_{\mathrm{sep}}}
\newcommand{\cA}{\mathcal A}
\newcommand{\cC}{\mathcal C}
\newcommand{\cD}{\mathcal D}

\newcommand{\sC}{\mathbb C}
\newcommand{\sR}{\mathbb R}
\newcommand{\ip}[2]{\left\langle #1,#2\right\rangle}
\newcommand{\ket}[1]{\lvert #1\rangle}
\newcommand{\bra}[1]{\langle #1\rvert}
\newcommand{\proj}[1]{\ket{#1}\bra{#1}}
\newcommand{\Id}{I}

\title{Metricity of separable quantum optimal transport \\ with the Hilbert--Schmidt cost}
\author{Tomasz Miller\\[1mm]
\small Copernicus Center for Interdisciplinary Studies,\\[-1mm]
\small Jagiellonian University, Krak\'ow, Poland}
\date{\today}

\begin{document}

\maketitle

\begin{abstract}
We prove that the square root of the separable quantum optimal transport cost associated with the orthogonal projection onto the antisymmetric subspace defines a genuine distance between density matrices. Equivalently, this establishes the triangle inequality for the order-two Beatty--Fran\c ca quantum optimal transport construction induced by the Hilbert--Schmidt distance between pure states. The result also proves metricity of the corresponding distance derived from separable SWAP fidelity. The proof replaces the unavailable gluing argument by convex-roof duality and a dimension-independent interpolation result for Hermitian operators.

\end{abstract}

\medskip
\noindent\textbf{Keywords.}
Quantum optimal transport, separable coupling, SWAP fidelity, convex roof, triangle inequality.

\smallskip
\noindent\textbf{2020 Mathematics Subject Classification.}
54E35, 81P16, 52A40, 49Q22, 15A45

\section{Introduction}
\label{sec::intro}

Classical Monge--Kantorovich optimal transport starts from a metric on a space of points
and lifts it, by minimizing an average transportation cost over
couplings (or transport plans), to a metric on probability measures. The triangle
inequality is normally proved by gluing two optimal couplings along
their common marginal. This strategy, however, becomes problematic for quantum states. A
density matrix admits many inequivalent decompositions into pure
states, and two bipartite quantum couplings with a common marginal need
not be marginals of a single tripartite state themselves. This quantum marginal
obstruction is one of the principal reasons why metricity remains
delicate for static coupling-based versions of quantum optimal
transport \cite{BeattySurvey}.

Various coupling-based approaches to quantum optimal transport 
have been proposed (see, for example, the recent overview \cite{BeattySurvey}). 
One construction uses the orthogonal projection
onto the antisymmetric subspace as its cost operator (or some more general `quantum cost operator')
\cite{ColeEtAl,FriedlandEtAl}. This construction was later shown, however, to violate the triangle inequality already on diagonal states \cite{Miller26}.

A different route, introduced and studied by Beatty and Stilck Fran\c ca~\cite{BeattyFranca}, 
begins with an actual metric on pure states and minimizes its $p$th power over
pure-product state decompositions of the two marginals\footnote{A special case of this approach was considered earlier by T\'oth and Pitrik \cite{TothPitrik23}.}. Equivalently, the
optimization is performed only over the \emph{separable} bipartite couplings.  This approach is
faithful, flexible in the underlying pure-state geometry, and retains
a direct transportation interpretation. However, the abovementioned quantum marginal obstruction persists; since the two
optimal transports meeting at an intermediate density matrix can use
different pure-state decompositions of that matrix, the classical
gluing proof remains unavailable. Beatty and Stilck Fran\c ca
therefore left the triangle inequality as an open problem.

The present paper resolves that problem for the natural Hilbert--Schmidt pure-state metric. If $P_x=\proj{x}$ and $P_z=\proj{z}$ are
rank-one projections, set
\begin{align*}
 d_{\textup{HS}}(P_x,P_z)
 =\tfrac{1}{\sqrt{2}} \| P_x-P_z \|_2
 =\sqrt{1-|\langle x,z \rangle|^2},
\end{align*}
and consider the order-$2$ Beatty--Fran\c ca construction generated by $\tfrac{1}{\sqrt{2}}d_{\textup{HS}}$, namely
\begin{align}
 \dsep(\rho,\sigma) = \sqrt{\tfrac{1}{2} \min \sum\nolimits_j p_j d_{\textup{HS}}^2(P_{x_j},P_{z_j})},
 \label{dsepBSF}
\end{align}
where the minimum runs over all sequences of triples $(p_j,x_j,z_j)_j$ with $p_j > 0$ and $x_j,z_j \in \sC^n$ are unit vectors such that $\sum\nolimits_j p_j P_{x_j} = \rho$ and $\sum\nolimits_j p_j P_{z_j} = \sigma$ (cf. \cite[Definitions 8 \& 9]{BeattyFranca}). In this paper we show that $\dsep$, defined below in a slightly different but equivalent way, does satisfy the triangle inequality and is indeed a genuine distance.

The above quantity has also appeared in \cite{TothPitrik26} in the context of various types of separable fidelities. 
In particular, T\'oth and Pitrik proved the triangle inequality when the intermediate state is pure \cite[Theorem 10]{TothPitrik26}. Furthermore, Borsoni~\cite{Borsoni} identified the Beatty--Fran\c ca construction with a folded Kantorovich semidistance.
The associated chain construction always satisfies the triangle inequality, but whether the original one-step transport cost already
coincides with that chain metric remained the central subadditivity question. Our result shows that it does for the metric $d_{\textup{HS}}$, in
every finite dimension.

A similar but distinct metricity problem was recently solved by
Wirth~\cite{Wirth} for the De Palma--Trevisan channel-based quantum
Wasserstein divergence.  That construction optimizes over quantum
channels and uses a different quadratic cost.  The present theorem
concerns pure-product state decompositions, or equivalently separable
couplings, and neither result follows from the other.

The argument below can be loosely summarized as follows. We first
express the squared transport cost as a convex roof and use standard
finite-dimensional convex duality to obtain an exact dual problem involving
two Hermitian operators satisfying a feasability constraint. Given two such operators $K,M$, we construct 
an interpolating Hermitian operator $L$ satisfying two inequalities---one featuring $K$ and the other featuring $M$---which then can be used to establish the desired triangle inequality for any middle state. All this can be regarded as a ``dual substitute'' for the missing gluing
operation.

The paper is organized as follows. Section~\ref{sec::2} sets the stage and states the main result. Section~\ref{sec::3} derives the exact dual problem from finite-dimensional convex analysis. Section~\ref{sec::4} records the non-strict Finsler lemma in the form used here. Section~\ref{sec::5} proves Lemma~\ref{lem:interpolation} concerning the construction of the interpolating operator, and Section~\ref{sec::6} uses it to establish the triangle inequality. Finally, Section~\ref{sec::7} contains a brief outlook, translating the result into the language of separable SWAP fidelity and folded optimal transport.

\section{Setting and main result}
\label{sec::2}

Let $H_n$ denote the space of $n \times n$ Hermitian matrices. The set of quantum states (density matrices) is
\begin{align*}
 \cD_n := \{\rho \in H_n \, | \ \rho\succeq0,\ \Tr\rho=1\}.
\end{align*}
We use the convention that the first marginal of a bipartite state $\Omega$ is $\Tr_2\Omega$ and the second is $\Tr_1\Omega$. Thus the set of separable quantum couplings of $\rho, \sigma \in \cD_n$ is
\begin{align*}
 \Gammasep(\rho,\sigma) :=
 \left\{
 \Omega\in\cD_{n^2} \, | \
 \Omega\ \text{is separable},\
 \Tr_2\Omega=\rho,\
 \Tr_1\Omega=\sigma
 \right\}.
\end{align*}
This set is nonempty because it contains $\rho\otimes\sigma$. It is also compact, because the marginal constraints define a closed subset of the set of separable states, which is itself compact as the convex hull of the compact set of pure product states (in finite dimensions).

Let $S$ denote the SWAP operator, $S(x \otimes z) = z \otimes x$, and let $P_{\cA} := \frac{I-S}{2}$ be the orthogonal projection onto the antisymmetric subspace. In an orthonormal basis,
\begin{align*}
 P_{\cA} = \sum_{1\leq i<j\leq n} \ket{i\wedge j}\bra{i\wedge j}, \qquad \ket{i\wedge j} = \frac{\ket{ij}-\ket{ji}}{\sqrt2}.
\end{align*}
We consider the separable analogue of the optimal transport problem studied in \cite{ColeEtAl,FriedlandEtAl}, in which only the \emph{separable} quantum couplings are allowed. The optimal transport cost and its associated `separable quantum $2$-Wasserstein distance' are thus defined as
\begin{align}
 T(\rho,\sigma) := \min_{\Omega\in\Gammasep(\rho,\sigma)}\Tr(P_{\cA}\Omega),
 \qquad
 \dsep(\rho,\sigma):=\sqrt{T(\rho,\sigma)}.
 \label{eq:def-T-d}
\end{align}
Compactness of $\Gammasep(\rho,\sigma)$ means that the minimum is attained. 

For any unit vector $x \in \sC^n$ let $P_x:=\proj{x}$ denote the corresponding projection operator. It is convenient to introduce the cost function
\begin{align}
 c(x,z) := \Tr\left(P_\cA(P_x\otimes P_z)\right) = \tfrac{1}{2} (1-|\langle x,z \rangle|^2) = \tfrac{1}{2} d_{\textup{HS}}^2(P_x,P_z)
 \label{eq:pure-cost}
\end{align}
for any unit vectors $x,z\in\sC^n$.

It is not hard to see that, as announced in Section \ref{sec::intro}, \eqref{eq:def-T-d} is the $p=2$ instance of the Beatty--Fran\c ca construction~\cite{BeattyFranca} based upon the pure-state distance $\tfrac{1}{\sqrt{2}}d_{\textup{HS}}$. Indeed, every separable coupling $\Omega \in \Gammasep(\rho,\sigma)$ can be refined, by spectrally decomposing its product factors, into a finite sum
\begin{align}
 \Omega = \sum\nolimits_j p_j P_{x_j} \otimes P_{z_j},
 \label{sepcoup}
\end{align}
and \eqref{eq:pure-cost} gives
\begin{align*}
 \Tr(P_\cA \Omega) = \sum\nolimits_j p_j c(x_j,z_j).
\end{align*}
Conversely, every sequence of triples $(p_j,x_j,z_j)_j$ defines through \eqref{sepcoup} a separable coupling with precscribed marginals. Hence the two minima agree and definition \eqref{eq:def-T-d} of $\dsep$ is indeed equivalent to \eqref{dsepBSF}.

Let us first show that $\dsep$ is a semidistance.
\begin{proposition}
\label{prop:elementary-axioms}
The function $\dsep$ is nonnegative and symmetric, and
\begin{align*}
 \dsep(\rho,\sigma) = 0 \quad\Longleftrightarrow\quad \rho=\sigma.
\end{align*}
\end{proposition}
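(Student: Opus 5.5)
Nonnegativity is immediate, since $P_\cA\succeq0$ and $\Omega\succeq0$ give $\Tr(P_\cA\Omega)\ge0$. Hence $T(\rho,\sigma)\ge0$ and the square root is well defined. For symmetry, let $F$ denote the flip map $F(\Omega):=S\Omega S$. We have $F(A\otimes B)=B\otimes A$, so $F$ preserves separability and exchanges the two partial traces. Thus $F$ is a bijection from $\Gammasep(\rho,\sigma)$ onto $\Gammasep(\sigma,\rho)$. Since $S$ commutes with $P_\cA=\tfrac{I-S}{2}$, we get $\Tr(P_\cA S\Omega S)=\Tr(SP_\cA S\,\Omega)=\Tr(P_\cA\Omega)$. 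The two minima therefore coincide and $T(\rho,\sigma)=T(\sigma,\rho)$.

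For the implication $\rho=\sigma\Rightarrow\dsep=0$, I will use a spectral decomposition $\rho=\sum_j\lambda_j P_{e_j}$. The diagonal coupling $\Omega=\sum_j\lambda_j P_{e_j}\otimes P_{e_j}$ is separable and has both marginals equal to $\rho$. By \eqref{eq:pure-cost} its cost is $\sum_j\lambda_j c(e_j,e_j)=\sum_j\lambda_j\cdot\tfrac12(1-1)=0$, so $T(\rho,\rho)=0$.

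The converse is the only step with content. Suppose $T(\rho,\sigma)=0$. By compactness the minimum is attained at some $\Omega\in\Gammasep(\rho,\sigma)$, and I will refine it as in \eqref{sepcoup} into $\Omega=\sum_j p_jP_{x_j}\otimes P_{z_j}$ with all $p_j>0$. Then $0=\sum_jp_jc(x_j,z_j)$. Each term is nonnegative, so $c(x_j,z_j)=\tfrac12(1-|\langle x_j,z_j\rangle|^2)=0$ for every $j$. By the equality case of Cauchy--Schwarz, $z_j$ is a phase multiple of $x_j$, which gives $P_{z_j}=P_{x_j}$. Taking marginals, $\rho=\sum_jp_jP_{x_j}=\sum_jp_jP_{z_j}=\sigma$.

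An alternative route, avoiding the decomposition, would note that $\Tr(P_\cA\Omega)=0$ forces $\Omega$ to be supported on the symmetric subspace, so $S\Omega=\Omega$, and then deduce $\Tr_2\Omega=\Tr_1\Omega$. The decomposition argument is more direct, and I expect no real obstacle beyond invoking the attainment of the minimum and the refinement \eqref{sepcoup} already established in the text.
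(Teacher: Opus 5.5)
Your proof is correct and matches the paper's argument: nonnegativity from $P_\cA\succeq0$, symmetry via $\Omega\mapsto S\Omega S$, the diagonal coupling $\sum_j p_j P_{x_j}\otimes P_{x_j}$ when $\rho=\sigma$ (you use a spectral decomposition, the paper any pure-state decomposition), and, for the converse, refining an optimal coupling into pure product form and forcing $P_{x_j}=P_{z_j}$ term by term. Your alternative via support on the symmetric subspace is also valid and does not even need separability, but your main line is the same as the paper's.
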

\begin{proof}
Nonnegativity is immediate from $P_\cA\succeq0$.  If $\Omega$ is a separable coupling of $\rho$ and $\sigma$, then $S\Omega S$ is a separable coupling of $\sigma$ and $\rho$. Since $SP_\cA S=P_\cA$, the cost is unchanged, proving symmetry.

If $\rho=\sigma$, choose a pure-state decomposition
\begin{align*}
 \rho=\sum\nolimits_j p_j P_{x_j}
\end{align*}
and set $\Omega := \sum\nolimits_j p_j P_{x_j} \otimes P_{x_j}$. This is a separable coupling of $\rho$ with itself which satisfies $\Tr(P_{\cA}\Omega)=0$. Hence $\dsep(\rho,\rho)=0$.

Conversely, suppose $\dsep(\rho,\sigma)=0$ and refine an optimal separable coupling $\Omega$ into pure product form \eqref{sepcoup}. By \eqref{eq:pure-cost} and nonnegativity of all summands,
\begin{align*}
 0=\Tr(P_\cA\Omega)=\sum\nolimits_j p_j c(x_j,z_j)
\end{align*}
implies $|\langle x_j,z_j \rangle|=1$ whenever $p_j>0$.  Thus $P_{x_j}=P_{z_j}$ for every such $j$, and the two marginals of $\Omega$ coincide.  Therefore $\rho = \sigma$.
\end{proof}

Our main result is the following.
\begin{theorem}
\label{thm:main}
For every $n$, the function $\dsep$ defined in \eqref{eq:def-T-d} satisfies the triangle inequality.  Consequently, in view of Proposition~\ref{prop:elementary-axioms}, it is a metric on $\cD_n$.  Explicitly, for all $\rho,\tau,\sigma\in\cD_n$,
\begin{align*}
 \dsep(\rho,\sigma) \leq \dsep(\rho,\tau) + \dsep(\tau,\sigma).
\end{align*}
\end{theorem}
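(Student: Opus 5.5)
We follow the route announced in the introduction: convex-roof duality, then an interpolation lemma standing in for gluing.

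\textbf{Step 1: dual formula.} Write $T(\rho,\sigma)$ as a linear program over the compact convex cone spanned by pure product states $P_x\otimes P_z$, with the two marginal constraints. Strong duality (Slater holds since $\rho\otimes\sigma$ is feasible and the cone is finitely generated in the relevant sense) should give
\begin{align*}
 T(\rho,\sigma)=\max\Bigl\{\Tr(K\rho)+\Tr(M\sigma)\ \Big|\ K,M\in H_n,\ \langle x,Kx\rangle+\langle z,Mz\rangle\le c(x,z)\ \text{for all unit } x,z\Bigr\}.
\end{align*}
The constraint says exactly that $c(x,z)-\langle x,Kx\rangle-\langle z,Mz\rangle\ge 0$ for all unit $x,z$. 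We will use the $\max$ form at $(\rho,\sigma)$ and the primal form at $(\rho,\tau)$ and $(\tau,\sigma)$.

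\textbf{Step 2: interpolation lemma.} Set $a=\dsep(\rho,\tau)$ and $b=\dsep(\tau,\sigma)$. We may assume $a,b>0$; the degenerate cases follow from Proposition~\ref{prop:elementary-axioms}. Put $t=a/(a+b)$. Given a dual-feasible pair $(K,M)$, we want a Hermitian $L$ satisfying
\begin{align*}
 \langle x,Kx\rangle+\langle y,Ly\rangle&\le \tfrac{1}{t}\,c(x,y),\\
 -\langle y,Ly\rangle+\langle z,Mz\rangle&\le \tfrac{1}{1-t}\,c(y,z),
\end{align*}
for all unit $x,y,z$. This is the quadratic analogue of the classical inequality $|x-z|^2\le |x-y|^2/t+|y-z|^2/(1-t)$, here applied to the chordal metric $d_{\mathrm{HS}}$, which is the restriction of a Euclidean norm.

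Granting this, take an optimal separable decomposition $\sum_j p_j P_{x_j}\otimes P_{y_j}$ for $(\rho,\tau)$ and $\sum_k q_k P_{y'_k}\otimes P_{z_k}$ for $(\tau,\sigma)$. Summing the two inequalities against these decompositions gives
\begin{align*}
 \Tr(K\rho)+\Tr(L\tau)&\le a^2/t,\\
 -\Tr(L\tau)+\Tr(M\sigma)&\le b^2/(1-t).
\end{align*}
The $\Tr(L\tau)$ terms cancel, because each decomposition has second or first marginal $\tau$. This is precisely where the different decompositions of $\tau$ no longer matter. Adding, we get $\Tr(K\rho)+\Tr(M\sigma)\le a^2/t+b^2/(1-t)=(a+b)^2$. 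Maximising over $(K,M)$ gives $T(\rho,\sigma)\le (a+b)^2$.

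\textbf{Step 3: proving the lemma (main obstacle).} Each condition separates into an inequality in $x$ and one in $y$ (resp.\ $y$ and $z$). For fixed $y$, the first condition is
\begin{align*}
 \langle y,Ly\rangle\le \min_x\Bigl[\tfrac{1}{2t}\bigl(1-|\langle x,y\rangle|^2\bigr)-\langle x,Kx\rangle\Bigr]=:f(y),
\end{align*}
and the second is $\langle y,Ly\rangle\ge \max_z[\cdots]=:g(y)$. So we need a Hermitian quadratic form sandwiched between $g$ and $f$ on the unit sphere.

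First we check $g\le f$ pointwise using dual feasibility of $(K,M)$ together with the scalar inequality above applied to $d_{\mathrm{HS}}$. For $g\le f$ to hold at each $y$, we must choose the minimising $x$ and maximising $z$ appropriately and then use feasibility at the pair $(x,z)$. The real difficulty is passing from the pointwise sandwich to a single quadratic form $L$. Here we plan to use the non-strict Finsler/S-lemma of Section~\ref{sec::4}: rewrite $f$ and $g$ as conditions of the form "one quadratic form is nonnegative whenever another is", and absorb the multipliers into $L$, for instance taking $L$ as an affine combination $\alpha K+\beta M+\gamma I$ adjusted by a Finsler multiplier.

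We expect the dimension independence, and the need to handle the nonconvex min/max over $x$ and $z$ jointly, to be the crux. If the direct sandwich fails, the fallback is a separating-hyperplane argument in $H_n$: the set of forms lying above $g$ and the set lying below $f$ are convex, and we would show they intersect, using Finsler to certify that they are not separated.
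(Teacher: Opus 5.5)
\textbf{Verdict: genuine gap.} Your architecture matches the paper's. You use the dual formula of Proposition~\ref{prop:duality} and an interpolation lemma with weights $1/t=1+\lambda$, $1/(1-t)=1+\lambda^{-1}$, where $\lambda=b/a$. You then cancel $\Tr(L\tau)$ across two unrelated decompositions of $\tau$. The gap is Step~3: Lemma~\ref{lem:interpolation} is the entire technical content of the paper, and none of the routes you sketch reaches it.

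\textbf{Why the pointwise sandwich is not enough.} Put $F_x=(\tfrac1{2t}-\langle x,Kx\rangle)I-\tfrac1{2t}P_x$ and $G_z=(\langle z,Mz\rangle-\tfrac1{2(1-t)})I+\tfrac1{2(1-t)}P_z$. The lemma asks for a single $L$ with $G_z\preceq L\preceq F_x$ for all unit $x,z$. Your pointwise sandwich $g\le f$ is equivalent only to $G_z\preceq F_x$ for each pair $(x,z)$ separately.
\begin{itemize}
\item That pairwise condition is exactly the dual form of the triangle inequality through a \emph{pure} intermediate state $\tau=P_y$. T\'oth and Pitrik already settled that case.
\item Pairwise comparability does not yield a common $L$ by order-theoretic reasoning. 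The Loewner order lacks the Riesz interpolation property: $0$ and $\operatorname{diag}(1,-1)$ both lie below $\operatorname{diag}(1,0)$ and $\begin{bmatrix}2&1\\1&1/2\end{bmatrix}$, yet no Hermitian matrix lies between the two pairs. So any proof must exploit the specific structure of $K$ and $M$.
\end{itemize}

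\textbf{Why the fallbacks do not help.} The separating-hyperplane fallback is circular. A functional separating $\{L\succeq G_z\ \forall z\}$ from $\{L\preceq F_x\ \forall x\}$ is some $W\succeq0$, i.e.\ a generally mixed state $\tau$. By the same convex-roof duality, the two sets meet essentially iff
\[
\Tr(K\rho')+\Tr(M\sigma')\le\tfrac1tT(\rho',\tau)+\tfrac1{1-t}T(\tau,\sigma')
\]
for all states. That is the weighted triangle inequality you are trying to prove. An affine ansatz $\alpha K+\beta M+\gamma I$ should also not be expected to work: the paper's interpolant depends on $(K,M)$ through resolvents.

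\textbf{The missing idea.} Apply Finsler not to $f,g$ but to the $2n\times2n$ pair $H=\begin{bmatrix}\frac12I-M&-\frac12I\\-\frac12I&\frac12I-K\end{bmatrix}$ and $J=\operatorname{diag}(I,-I)$. The null cone of $J$ parametrizes (scaled) pairs of unit vectors $(z,x)$. On it, feasibility gives
\[
1-\langle x,Kx\rangle-\langle z,Mz\rangle-\RE\langle x,z\rangle\ge\tfrac12|1-\langle x,z\rangle|^2\ge0.
\]
This produces a scalar $h$ and an \emph{operator-level} certificate of feasibility: $\begin{bmatrix}A&-\frac12I\\-\frac12I&B\end{bmatrix}\succeq0$ with $A=(\frac12-h)I-M$ and $B=(\frac12+h)I-K$. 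The consequences are:
\begin{itemize}
\item a Schur complement gives $B\succeq\frac14A^{-1}$;
\item with $X=A+\frac1{2\lambda}I$ and $Y=B+\frac\lambda2I$, this yields $\lambda^{-1}X^{-1}+\lambda Y^{-1}\preceq2I$;
\item the interpolant is $L=(\frac{\lambda^2-1}{2\lambda}-h)I+\frac{1+\lambda}{4\lambda^2}X^{-1}-\frac{\lambda(1+\lambda)}4Y^{-1}$, verified by Cauchy--Schwarz and a two-case scalar estimate that also uses the original feasibility.
\end{itemize}

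\textbf{Minor point on Step~1.} The dual supremum need not be attained; it is not attained, e.g., for pure $\rho$ and full-rank $\sigma$. Also, primal Slater fails whenever a marginal is singular. Zero duality gap follows instead from strict dual feasibility (e.g.\ $K=M=-I$) or from the paper's Fenchel--Moreau argument. Since you only use the supremum, this does not affect the argument.
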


In order to prove it, we shall first establish the following lemma which replaces the standard gluing argument.
\begin{lemma}
\label{lem:interpolation}
Let $K,M \in H_n$ satisfy
\begin{align}
 \ip{x}{Kx}+\ip{z}{Mz}
 \leq
 c(x,z)
 \label{eq:product-feasibility}
\end{align}
for all unit $x,z\in\sC^n$. Let $\lambda>0$.  Then there exists $L \in H_n$ such that, for all unit $x,y,z$,
\begin{align}
 \ip{x}{Kx}+\ip{y}{Ly}
 &\leq (1+\lambda)c(x,y),
 \label{eq:interpolating-1}\\
 -\ip{y}{Ly}+\ip{z}{Mz}
 &\leq (1+\lambda^{-1}) c(y,z).
 \label{eq:interpolating-2}
\end{align}
\end{lemma}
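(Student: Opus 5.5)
The plan is to recast the lemma as an operator sandwich problem and then find one Hermitian $L$ inside the sandwich. Write $k(x):=\ip{x}{Kx}$, $m(z):=\ip{z}{Mz}$, $a:=\tfrac{1+\lambda}{2}$ and $b:=\tfrac{1+\lambda^{-1}}{2}$.

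\textbf{Step 1: reformulation.} For fixed unit $x$, inequality \eqref{eq:interpolating-1} is a quadratic inequality in $y$. Since $\ip{y}{y}=1$, it is equivalent to the operator inequality
\begin{align*}
 L \preceq B_x := a(\Id-P_x)-k(x)\Id .
\end{align*}
In the same way, \eqref{eq:interpolating-2} is equivalent to
\begin{align*}
 L\succeq -C_z, \qquad C_z := b(\Id-P_z)-m(z)\Id .
\end{align*}
So the lemma asks for a single Hermitian $L$ with $-C_z\preceq L\preceq B_x$ for all unit $x,z$.

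\textbf{Step 2: pairwise compatibility.} A necessary condition is $B_x+C_z\succeq0$ for every pair $x,z$. The largest eigenvalue of $aP_x+bP_z$ is
\begin{align*}
 \tfrac{a+b}{2}+\sqrt{\tfrac{(a-b)^2}{4}+ab\,s}, \qquad s=|\ip{x}{z}|^2 .
\end{align*}
We also have $ab=\tfrac{a+b}{2}$. Hence the pairwise condition becomes the scalar inequality $k(x)+m(z)\le \tfrac{a+b}{2}-\sqrt{\cdots}$. This should follow from \eqref{eq:product-feasibility}, i.e.\ $k(x)+m(z)\le \tfrac12(1-s)$. It amounts to the elementary inequality $d^2(x,z)\le(1+\lambda)d^2(x,y)+(1+\lambda^{-1})d^2(y,z)$ for the Fubini--Study-type chordal distance, optimized over $y$ in the span of $x,z$. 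I expect a short two-dimensional computation to settle it.

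\textbf{Step 3: a common $L$ (main obstacle).} Pairwise compatibility does not by itself give a common $L$, because operator intervals have no Helly property. First I would try an explicit affine ansatz $L=\alpha M-\beta K+\gamma \Id$ with $\alpha,\beta\ge0$, e.g.\ $\alpha=\tfrac{\lambda}{1+\lambda}$ and $\beta=\tfrac{1}{1+\lambda}$. This is guided by the classical inf-convolution $\ell(y)=\inf_x[(1+\lambda)c(x,y)-k(x)]$, and the ansatz uses that $x=z$ in \eqref{eq:product-feasibility} gives $K+M\preceq0$. To verify \eqref{eq:interpolating-1} for this $L$, I would expand it in the variables $x,y$. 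Then I would eliminate $M$ using \eqref{eq:product-feasibility} with $z$ chosen in $\operatorname{span}\{x,y\}$; this is the point where the feasibility constraint has to be applied with a well-chosen third vector.

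If the ansatz fails, the fallback is a duality argument. The set of admissible $L$ is convex and closed. If it were empty, a separation theorem, combined with a minimax/$S$-lemma step, would give positive measures $\mu,\nu$ on the unit sphere satisfying the incompatibility
\begin{align*}
 \int B_x\,d\mu(x)+\int C_z\,d\nu(z)\ \text{with}\ \textstyle\int P\,d\mu=\int P\,d\nu
\end{align*}
for some density $P$. Concretely, this produces states on which the would-be lower bound exceeds the upper bound. The non-strict Finsler lemma of Section~\ref{sec::4} would then convert the quadratic-form statement ``$\ip{v}{Av}\ge0$ whenever $\ip{v}{Bv}=0$'' into an operator inequality $A+\mu B\succeq0$. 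That inequality would reduce the obstruction to the pairwise inequality of Step~2, giving a contradiction.

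In either route, the decisive difficulty is passing from the pairwise scalar inequality to a single operator. The dimension-independence has to come from the fact that all constraints involve only rank-one perturbations $P_x,P_z$ of multiples of $\Id$.
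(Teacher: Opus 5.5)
Your Steps 1 and 2 are correct. The sandwich reformulation $-C_z\preceq L\preceq B_x$ is exact, and $B_x+C_z\succeq0$ does follow from \eqref{eq:product-feasibility} by the two-dimensional eigenvalue computation. But they are the easy part: the whole content of the lemma lies in Step 3, and neither of your routes closes it.

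\textbf{The affine ansatz fails.} It fails for every choice of fixed coefficients $\alpha,\beta\ge0$, even with $\gamma$ allowed to depend on $K,M$.
\begin{itemize}
\item Take $M=0$ and $K=-\kappa P_{e_2}$. This pair is feasible, since $\ip{x}{Kx}\le0\le c(x,z)$, and the ansatz gives $L=\beta\kappa P_{e_2}+\gamma I$. Then \eqref{eq:interpolating-2} at $y=z=e_1$ forces $\gamma\ge0$, and \eqref{eq:interpolating-1} at $x=e_1$, $y=e_2$ forces $\beta\kappa+\gamma\le\frac{1+\lambda}{2}$. So any $\beta>0$, in particular your $\beta=\frac1{1+\lambda}$, breaks down for large $\kappa$, even though $L=0$ works for this pair.
\item The mirror pair $K=0$, $M=-\kappa P_{e_2}$ excludes $\alpha>0$ in the same way.
\item A scalar $L=\gamma I$ fails for the feasible pair $K=\frac14P_{e_1}-\frac34P_{e_2}$, $M=-\frac34P_{e_1}+\frac14P_{e_2}$. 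Here \eqref{eq:interpolating-1} at $y=x=e_1$ gives $\gamma\le-\frac14$, while \eqref{eq:interpolating-2} at $y=z=e_2$ gives $\gamma\ge\frac14$.
\end{itemize}

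\textbf{The duality fallback is circular.} Carried out correctly, the separation argument shows that a common $L$ exists if and only if $\Tr(K\rho)+\Tr(M\sigma)\le(1+\lambda)T(\rho,\tau)+(1+\lambda^{-1})T(\tau,\sigma)$ for all $\rho,\tau,\sigma\in\cD_n$. By Proposition~\ref{prop:duality}, requiring this for all feasible $K,M$ is exactly the weighted triangle inequality, i.e.\ Theorem~\ref{thm:main} itself. Your pairwise condition is only the special case of pure endpoints $\rho=P_x$, $\sigma=P_z$, where couplings are forced to be products. The claimed Finsler-based reduction of the general obstruction to that case is precisely the Helly-type step you yourself say is unavailable.

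\textbf{The missing idea.} The paper applies Lemma~\ref{lem:finsler} directly to the feasibility constraint, lifted to $\sC^n\oplus\sC^n$, not to an obstruction.
\begin{itemize}
\item Take $J=\mathrm{diag}(I,-I)$, and let $H$ have diagonal blocks $\frac12I-M$, $\frac12I-K$ and off-diagonal blocks $-\frac12I$. For $w=(u,v)$ with $\|u\|=\|v\|$, \eqref{eq:product-feasibility} gives $\ip{w}{Hw}\ge0$. Hence there is $h\in\sR$ with $H-hJ\succeq0$.
\item Set $A=(\frac12-h)I-M\succ0$ and $B=(\frac12+h)I-K\succ0$. The Schur complement gives $B\succeq\frac14A^{-1}$. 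This is a genuine positive-semidefinite certificate replacing the mere block-positivity of $P_\cA-K\otimes I-I\otimes M$.
\item The interpolant is then nonlinear in $K$ and $M$. With $X=A+\frac1{2\lambda}I$ and $Y=B+\frac\lambda2I$, it is $L=\bigl(\frac{\lambda^2-1}{2\lambda}-h\bigr)I+\frac{1+\lambda}{4\lambda^2}X^{-1}-\frac{\lambda(1+\lambda)}{4}Y^{-1}$.
\item It is verified using three ingredients and then a short case analysis:
  \begin{itemize}
  \item $\lambda^{-1}X^{-1}+\lambda Y^{-1}\preceq2I$, which follows from the Schur bound;
  \item the Cauchy--Schwarz estimate $|\ip{u}{v}|^2\le\ip{u}{X^{-1}u}\ip{v}{Xv}$;
  \item a rank-one lower bound on $Y$ obtained from \eqref{eq:product-feasibility}.
  \end{itemize}
\end{itemize}
None of these ingredients appears in your plan.
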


\section{Convex-roof and dual formulations}
\label{sec::3}

\begin{proposition}
\label{prop:duality}
For all $\rho,\sigma\in\cD_n$,
\begin{align}
 T(\rho,\sigma)
 =
 \sup \left\{ \Tr(K\rho)+\Tr(M\sigma) \, \left| \,
 \begin{array}{l}
 K,M\in H_n,\\
 \ip{x}{Kx}+\ip{z}{Mz} \leq c(x,z) \ \textup{for all unit }x,z \in \sC^n
 \end{array}
 \right.
 \right\}.
 \label{eq:duality}
\end{align}
\end{proposition}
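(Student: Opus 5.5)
The plan is to prove \eqref{eq:duality} by conic/Lagrangian duality over the cone generated by pure product states. Weak duality comes first and is immediate. If $K,M$ are feasible and $\Omega=\sum_j p_j P_{x_j}\otimes P_{z_j}\in\Gammasep(\rho,\sigma)$ is written in the refined form \eqref{sepcoup}, then
\begin{align*}
\Tr(K\rho)+\Tr(M\sigma)=\sum\nolimits_j p_j\bigl(\ip{x_j}{Kx_j}+\ip{z_j}{Mz_j}\bigr)\leq \sum\nolimits_j p_j c(x_j,z_j)=\Tr(P_\cA\Omega).
\end{align*}
Minimizing over $\Omega$ gives $\sup\le T(\rho,\sigma)$.

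For the reverse inequality I will use a separation argument in the real vector space $V:=H_n\times H_n\times\sR$. Consider the set
\begin{align*}
\cC:=\bigl\{(\Tr_2\Omega,\Tr_1\Omega,\Tr(P_\cA\Omega)+t)\,\big|\,\Omega\text{ separable},\ \Omega\succeq0,\ t\ge0\bigr\},
\end{align*}
which is a convex cone generated by the points $(P_x,P_z,c(x,z))$ together with $(0,0,1)$. This cone is closed. The generators $(P_x,P_z,c(x,z))$ form a compact set that does not contain $0$, since $\Tr P_x=1$. Adding the ray $(0,0,1)$ keeps the cone closed, because the first two coordinates of the compact part are bounded away from zero in trace. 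Fix $\varepsilon>0$. The point $(\rho,\sigma,T(\rho,\sigma)-\varepsilon)$ does not lie in $\cC$: any point of $\cC$ with first coordinate $\rho$ has a trace-one $\Omega$, so its last coordinate is at least $T$.

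By the separating hyperplane theorem for a closed convex cone there is a triple $(K,M,s)\in V$, paired via $\Tr$ on each $H_n$ factor, such that $\Tr(KA)+\Tr(MB)-s\,u\le 0$ for all $(A,B,u)\in\cC$, while $\Tr(K\rho)+\Tr(M\sigma)-s(T-\varepsilon)>0$. Testing on $(0,0,1)$ gives $s\ge0$.

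The main obstacle is excluding $s=0$. Suppose $s=0$. Then $\ip{x}{Kx}+\ip{z}{Mz}\le0$ for all unit $x,z$, whereas $\Tr(K\rho)+\Tr(M\sigma)>0$. This is impossible, because $\rho\otimes\sigma\in\Gammasep(\rho,\sigma)$ decomposes into pure products, so the weak-duality computation gives $\Tr(K\rho)+\Tr(M\sigma)\le0$. Hence $s>0$. Rescaling so that $s=1$, the pair $(K,M)$ satisfies the product-feasibility constraint \eqref{eq:product-feasibility}, by testing on the generators $(P_x,P_z,c(x,z))$, and $\Tr(K\rho)+\Tr(M\sigma)>T-\varepsilon$. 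Letting $\varepsilon\to0$ yields $\sup\ge T$.

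An alternative route would invoke a standard minimax theorem for the bilinear Lagrangian on the compact convex set of separable states, but the cone-separation argument above is self-contained in finite dimensions.
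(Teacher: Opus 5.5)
Your proof is correct, but it argues differently from the paper.

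The paper's route is as follows. It first identifies $T$ with the convex roof $g^{\cup}$ of the pure-product cost on $\cD_n\times\cD_n$, using Carath\'eodory's theorem and compactness of the set of decompositions to get attainment. It then proves that $g^{\cup}$ is lower semicontinuous and applies the Fenchel--Moreau theorem to write $g^{\cup}$ as the supremum of its affine minorants. Only then does it reduce the minorant condition to the extreme points $(P_x,P_z)$, absorbing the constant term into $K$ via $\Tr\rho=\Tr\sigma=1$.

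You bypass the convex-roof layer entirely. You homogenize the epigraph of $T$ into a convex cone in $H_n\times H_n\times\sR$ generated by the pure product points and $(0,0,1)$, and strictly separate $(\rho,\sigma,T-\varepsilon)$ from it, which is a Farkas-type argument.

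The technical points correspond closely:
\begin{itemize}
\item Closedness of your cone plays the role of the paper's lower-semicontinuity argument. A crisp way to justify it: $(A,B,u)\mapsto\Tr A+u$ is at least $1$ on every generator, including $(0,0,1)$, so the cone has a compact base not containing $0$.
\item Your exclusion of the vertical hyperplane $s=0$ via the feasible coupling $\rho\otimes\sigma$ is the Slater-type step that Fenchel--Moreau absorbs through properness of $f$.
\item Homogeneity gives you a linear rather than affine separating functional, so no constant needs absorbing.
\end{itemize}

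Your route is shorter and more elementary. It needs neither Carath\'eodory, nor attainment of the minimum, nor the identification of $T$ with the convex-roof formula, and weak duality is displayed explicitly. The paper's route yields, as a by-product, the structural description of $T$ as a jointly convex, lower semicontinuous convex roof whose optimal decompositions need at most $2n^2-1$ pure product terms. This places the duality within the standard convex-roof framework.
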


\begin{proof}
Let $\cC := \cD_n\times\cD_n$, regarded as a compact convex subset of the $\sR$-linear space $H_n \oplus H_n$. Its extreme points are exactly the pairs $(P_x,P_z)$ of rank-one projections. In particular, $\ext \cC$ is compact. Define the continuous function $g$ on $\operatorname{ext}\cC$ by
\begin{align*}
 g(P_x,P_z)=c(x,z).
\end{align*}
Since $H_n$ is a real vector space of dimension $n^2$, we have
\begin{align*}
 \operatorname{aff}\cC
 =
 \left\{
 (A,B)\in H_n\oplus H_n\,|\,\Tr A=\Tr B=1
 \right\},
 \qquad
 \dim(\operatorname{aff}\cC)=2(n^2-1).
\end{align*}
Extend $g$ to an extended-real-valued function on $\operatorname{aff}\cC$ by
\begin{align*}
 \widetilde g(a)
 :=
 \begin{cases}
  g(a),&a\in\ext\cC,\\
  +\infty,&a\notin\ext\cC.
 \end{cases}
\end{align*}
A standard application of Carath\'{e}odory's theorem~\cite[Corollary~17.1.5]{Rockafellar} shows that the greatest convex function majorized by $\widetilde g$, denoted $\operatorname{conv}\widetilde g$, is obtained by taking the infimum of $\sum\nolimits_j p_j g(a_j)$ over all finite extreme-point convex decompositions $a = \sum\nolimits_j p_j a_j$, where at most $2n^2-1$ points are needed. We claim that this infimum is attained in the present setting. Indeed, consider the set of convex decompositions (with zero coefficients allowed)
\begin{align*}
\mathfrak{A} := \Big\{ (p_j,a_j)_{j=1}^{2n^2-1} \, \Big| \ \forall j \ p_j \geq 0, a_j \in \ext\cC, \ \sum\nolimits_j p_j = 1 \Big\}.
\end{align*}
This set is compact, being homeomorphic to the product of the probability $2(n^2-1)$-simplex and $(\ext\cC)^{2n^2-1}$. Additionally, for any fixed $a\in\cC$ introduce
\begin{align*}
\mathfrak{A}_a := \Big\{ (p_j,a_j)_j \in \mathfrak{A} \, \Big| \ \sum\nolimits_j p_j a_j = a \Big\}.
\end{align*} 
Clearly, $\mathfrak{A}_a$ is a closed---and hence compact---subset of $\mathfrak{A}$. Since $\cC=\operatorname{conv}(\ext\cC)$, the set $\mathfrak{A}_a$ is also nonempty. Furthermore, the continuity of the map $\mathfrak{A} \ni (p_j,a_j)_j \mapsto \sum\nolimits_j p_j g(a_j)$ means that it attains its infimum on $\mathfrak{A}_a$ for any chosen $a\in\cC$.

As a result, we obtain the convex roof of $g$ as the restricted map $g^{\cup} := (\operatorname{conv}\widetilde g)|_{\cC}$, i.e.,
\begin{align}
 g^{\cup}(\rho,\sigma)
 :=
 \min\Big\{
 \sum\nolimits_j p_j c(x_j,z_j) \, \Big| \ \forall j \; p_j \geq 0, \
 \sum\nolimits_j p_j P_{x_j} = \rho, \ \sum\nolimits_j p_j P_{z_j} = \sigma
 \Big\},
 \label{convexroof}
\end{align}
in agreement with the standard convex-roof formula~\cite[equation~(25)]{Uhlmann}. Notice that $g^{\cup}= \dsep^2 = T$ by formula \eqref{dsepBSF} proven in the previous section.

Thus, in order to establish identity \eqref{eq:duality}, we have to find the dual formula for the convex roof $g^{\cup}$. To this end we shall use the Fenchel--Moreau theorem \cite{Bauschke}, which says that a proper function on a (Hausdorff) locally convex space is equal to its biconjugate iff it is convex and lower semi-continuous. Since we already know that $g^{\cup}$ is proper and convex, we only need to assert its lower semicontinuity.

Let us therefore take any $a_k\to a$ in $\cC$, and pass to a subsequence along which $g^{\cup}(a_k)$ converges to $\liminf_k g^{\cup}(a_k)$. For every $k$, choose a minimizing decomposition of $a_k$ with $2n^2-1$ terms, allowing zero coefficients. By compactness of $\mathfrak{A}$, after passing to a further subsequence, all coefficients and all extreme points in these decompositions converge. Their limits give an extreme-point decomposition $a=\sum\nolimits_j p_j e_j$, and continuity of $g$ together with convexity of $g^{\cup}$ allow us to write
\begin{align*}
 \liminf_{k\to\infty}g^{\cup}(a_k)
 =\sum\nolimits_j p_j g(e_j)=\sum\nolimits_j p_j g^{\cup}(e_j)
 \geq g^{\cup}(a).
\end{align*}
This proves the lower semicontinuity of $g^{\cup}$.

On the finite-dimensional real vector space $H_n\oplus H_n$, define $f:H_n\oplus H_n\to(-\infty,+\infty]$ by
\begin{align*}
 f(a)
 :=
 \begin{cases}
  g^{\cup}(a),&a\in\cC,\\
  +\infty,&a\notin\cC.
 \end{cases}
\end{align*}
Since $\cC$ is closed and $g^{\cup}$ is lower-semicontinuous on $\cC$, the function $f$ is proper, lower-semicontinuous, and convex. By the Fenchel--Moreau theorem, $f$ is therefore equal to its biconjugate and as such it can be represented as the pointwise supremum of its affine minorants. On $\cC$, the condition that an affine function on $H_n\oplus H_n$ minorize $f$ is equivalent to its restriction to $\cC$ minorizing $g^{\cup}$. Conversely, every affine function on $\cC$ extends to one on $H_n\oplus H_n$. It follows that, for any $a\in\cC$,
\begin{align*}
 g^{\cup}(a)
 =
 \sup\left\{
 \ell(a)\,|\,
 \ell\text{ is affine on }\cC,\ \ell\leq g^{\cup}\text{ on }\cC
 \right\}.
\end{align*}

It remains to express the minorant condition solely on the extreme points. If $\ell\leq g^{\cup}$ on $\cC$, then $\ell(e)\leq g(e)$ for every $e\in\ext\cC$. Conversely, if the latter inequalities hold, then for every extreme-point decomposition $a=\sum\nolimits_j p_j e_j$,
\begin{align*}
 \ell(a)=\sum\nolimits_j p_j \ell(e_j) \leq \sum\nolimits_j p_j g(e_j).
\end{align*}
Taking the minimum over all such decompositions gives $\ell(a)\leq g^{\cup}(a)$. We have therefore proved
\begin{align}
 g^{\cup}(a)
 =
 \sup\left\{
 \ell(a) \, | \,
 \ell\text{ is affine on }\cC, \
 \ell(e)\leq g(e) \ \text{for every }e \in \ext\cC
 \right\}.
 \label{eq:affine-dual}
\end{align}

Every real-valued affine function on $\cC$ can be written as
\begin{align*}
 \ell(\rho,\sigma)
 =r+\Tr(K\rho)+\Tr(M\sigma)
\end{align*}
for some $K,M \in H_n$ and $r\in\sR$. But since $\Tr\rho=\Tr\sigma=1$, the constant $r$ can be absorbed into one of the Hermitian matrices and we can thus assume that
\begin{align*}
 \ell(\rho,\sigma)=\Tr(K\rho)+\Tr(M\sigma).
\end{align*}
The extreme-point constraint in \eqref{eq:affine-dual} is then exactly
\begin{align*}
 \ip{x}{Kx}+\ip{z}{Mz} \leq c(x,z)
\end{align*}
for all unit $x,z \in \sC^n$. Substitution into \eqref{eq:affine-dual} gives \eqref{eq:duality}, concluding the proof.
\end{proof}

\begin{remark}
The dual formula \eqref{eq:duality} may equivalently be written as
\begin{align*}
 T(\rho,\sigma)
 =
 \sup \left\{ \Tr(K\rho)+\Tr(M\sigma) \, \left| \,
 \begin{array}{l}
 K,M\in H_n,\\
 P_\cA-K\otimes I-I\otimes M \textnormal{ is block-positive}
 \end{array}
 \right.
 \right\},
\end{align*}
where we recall that a bipartite Hermitian operator $W$ is block-positive if it satisfies $\ip{x\otimes z}{W(x\otimes z)}\geq0$ for all $x,z\in\sC^n$. 

It is instructive to compare this formula with the dual formulation of the minimization of $\Tr(P_\cA\Omega)$ over \emph{all} quantum couplings $\Omega$ rather than only the separable ones, given in \cite[Theorem 3.2]{ColeEtAl}. The sole formal difference is that the latter dual problem imposes the stronger constraint $P_\cA-K\otimes I-I\otimes M\succeq 0$. Thus, restricting the primal problem to separable couplings corresponds in the dual problem to relaxing positive semidefiniteness to block positivity.
\end{remark}

\section{The standard non-strict Finsler lemma}
\label{sec::4}

The next elementary form of Finsler's lemma will produce the scalar shift needed to construct the interpolating operator $L$ in Lemma \ref{lem:interpolation}.

\begin{lemma}[Non-strict Finsler lemma]
\label{lem:finsler}
Let $H,J$ be Hermitian operators on a finite-dimensional complex Hilbert space. Suppose that $J$ is indefinite, meaning that its quadratic form takes both positive and negative values. If
\begin{align}
 \ip{w}{Hw}\geq0
 \qquad\text{whenever}\qquad
 \ip{w}{Jw}=0,
 \label{eq:finsler-hyp}
\end{align}
then there exists $h\in\sR$ such that
\begin{align}
 H-hJ \succeq 0.
 \label{eq:finsler-concl}
\end{align}
\end{lemma}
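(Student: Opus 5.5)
The plan is the classical S-lemma argument: reduce to a separation problem in $\sR^2$ by using convexity of the joint numerical range of two Hermitian forms. Consider the set
\begin{align*}
 \cC_{H,J} := \left\{ \bigl(\ip{w}{Hw},\ip{w}{Jw}\bigr) \, | \, w \textnormal{ in the Hilbert space} \right\} \subseteq \sR^2 .
\end{align*}
The first step is to show that $\cC_{H,J}$ is a convex cone. It is closed under nonnegative scaling, since $w\mapsto tw$ multiplies both coordinates by $t^2$. For convexity I would invoke the Toeplitz--Hausdorff theorem. Applied to the (generally non-Hermitian) operator $H+iJ$, it says that the numerical range $\{\ip{w}{(H+iJ)w} \,|\, \|w\|=1\}$ is convex. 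Identifying $\sC$ with $\sR^2$, this set is exactly the image of the unit sphere under $w\mapsto(\ip{w}{Hw},\ip{w}{Jw})$. The cone over a convex set is convex, and $\cC_{H,J}$ is precisely that cone (together with $0$).

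This is the step where the complex structure is essential; the statement fails over $\sR$ in dimension $2$. I expect it to be the main conceptual input, although it can simply be quoted as a standard result. If the Hilbert space is one-dimensional the hypothesis that $J$ is indefinite cannot hold, so that case is vacuous.

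Next, the hypothesis \eqref{eq:finsler-hyp} says that $\cC_{H,J}$ is disjoint from the open ray $R:=\{(a,0)\,|\,a<0\}$, which is a nonempty convex set. By the separating hyperplane theorem in $\sR^2$ there is a nonzero $(\alpha,\beta)$ and $\gamma\in\sR$ with
\begin{align*}
 \alpha a+\beta b \geq \gamma \quad \textnormal{on } \cC_{H,J}, \qquad \alpha a\leq\gamma \quad\textnormal{for all } a<0.
\end{align*}
Because $\cC_{H,J}$ is a cone containing $0$, these inequalities force $\gamma\leq 0$ and $\alpha a+\beta b\geq0$ on all of $\cC_{H,J}$. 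Letting $a\to-\infty$ in the second inequality gives $\alpha\geq0$.

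Finally I rule out $\alpha=0$. In that case $\beta\neq0$ and $\beta\ip{w}{Jw}\geq0$ for every $w$, which contradicts the indefiniteness of $J$. Hence $\alpha>0$, and dividing by $\alpha$ gives $\ip{w}{(H+\tfrac{\beta}{\alpha}J)w}\geq0$ for all $w$. This is \eqref{eq:finsler-concl} with $h:=-\beta/\alpha$.
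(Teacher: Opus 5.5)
Your proof is correct, but it is not the route the paper takes. The paper gives no self-contained argument. It quotes the real-symmetric, indefinite case of the non-strict Finsler lemma (Lemma~2 of Meijer et al.) and transfers it to Hermitian operators via the realification $\mathfrak{R}$. That map preserves quadratic forms, and hence indefiniteness and the null-cone hypothesis; it also preserves Loewner order and is linear in the operator.

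You instead prove the Hermitian statement directly, in three steps:
\begin{itemize}
\item Toeplitz--Hausdorff applied to $H+iJ$ makes the joint-range cone convex.
\item A separating line in $\sR^2$ between this cone and the open negative ray gives a functional $\alpha a+\beta b$ that is nonnegative on the cone, with $\alpha\geq 0$.
\item Indefiniteness of $J$ is used exactly once, to exclude the vertical separator $\alpha=0$.
\end{itemize}
The paper's reduction is shorter but outsources the real content to the literature. Yours is self-contained modulo two classical theorems, works natively in the complex setting, and makes transparent where the indefiniteness hypothesis enters.

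One side remark of yours is inaccurate: the complex structure is not essential here, and the lemma does not fail over $\sR$ in dimension $2$. What fails over $\sR^2$ is convexity of the image of the unit sphere. For $\operatorname{diag}(1,-1)$ and $\left[\begin{smallmatrix}0&1\\1&0\end{smallmatrix}\right]$ that image is the unit circle. Your argument, however, only uses the cone. Dines' theorem guarantees that $\{(x^{T}Hx,\,x^{T}Jx)\,|\,x\in\sR^n\}$ is convex for every $n$. So the same proof runs verbatim over $\sR$ and yields the real statement the paper cites. Since you never use that remark, your proof itself is unaffected.
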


This is the indefinite case of the standard non-strict Finsler lemma; see, for example, Lemma~2 of Meijer et al.~\cite{MeijerEtAl}. The result in that reference is stated for real symmetric matrices. The Hermitian formulation above follows immediately by realification: for a Hermitian matrix $C$, set
\begin{align}
 \mathfrak{R}(C)
 :=
 \begin{bmatrix}
  \RE C & -\IM C\\
  \IM C & \RE C
 \end{bmatrix},
 \label{eq:realification}
\end{align}
where Re and Im are taken \emph{entrywise}. If $w=u+iv$, then
\begin{align*}
 \ip{w}{Cw}
 =
 \begin{bmatrix}u & v\end{bmatrix}
 \mathfrak{R}(C)
 \begin{bmatrix}u\\v\end{bmatrix}.
\end{align*}
Notice that the latter identity immediately implies that realification preserves indefiniteness and Loewner order. Moreover, $\mathfrak{R}(H-hJ)=\mathfrak{R}(H)-h\mathfrak{R}(J)$, so the cited real result yields the asserted Hermitian version.

\section{Proof of Lemma~\ref{lem:interpolation}}
\label{sec::5}

\begin{proof}[Proof of Lemma~\ref{lem:interpolation}]
For a Hermitian operator $A \in H_n$ and a unit vector $x \in \sC^n$, we shall abbreviate $A_x:=\ip{x}{Ax}$.

Fix $K,M \in H_n$ satisfying \eqref{eq:product-feasibility}. We construct the interpolating operator $L$ in four steps.

\medskip
\noindent\textbf{Step 1.}
On $\sC^n\oplus\sC^n$, define
\begin{align}
 H :=
 \begin{bmatrix}
 \tfrac12 I-M & -\tfrac12 I \\
 -\tfrac12 I & \tfrac12 I-K
 \end{bmatrix},
 \qquad
 J :=
 \begin{bmatrix}
 I & 0\\
 0 &-I
 \end{bmatrix}.
 \label{eq:M-J-direct}
\end{align}
The operator $J$ is indefinite. We verify the hypothesis of Lemma~\ref{lem:finsler}. If $w=(u,v)$ satisfies
\begin{align}
 \ip{w}{Jw } = \|u\|^2-\|v\|^2=0,
 \label{eq:direct-null-cone}
\end{align}
then either $w=0$, or $r:=\|u\|=\|v\|>0$.  In the latter case put
\begin{align}
 z=\frac ur, \qquad x=\frac vr.
 \label{eq:direct-unit-vectors}
\end{align}
These are unit vectors. Using \eqref{eq:product-feasibility} we obtain
\begin{align*}
 \tfrac1{r^2}\ip{w}{Hw}
 &=1-K_x-M_z-\RE\ip{x}{z}\\
 &\geq
 1-\RE\ip{x}{z}-\tfrac12\left(1-|\langle x, z \rangle|^2\right)\\
 &=\tfrac12|1-\langle x, z \rangle|^2\geq0.
\end{align*}
Thus $\ip{w}{Hw}\geq0$ throughout the null cone of $J$. The non-strict Finsler lemma supplies $h\in\sR$ such that $H-hJ\succeq 0$. Expanding this inequality gives
\begin{align}
 \begin{bmatrix}
 \big(\tfrac12 - h\big) I - M & -\tfrac12 I \\[1mm]
 -\tfrac12 I & \big(\tfrac12 + h\big) I - K
 \end{bmatrix}
 \succeq 0.
 \label{eq:central-block}
\end{align}

\medskip
\noindent\textbf{Step 2.}
For convenience, denote the diagonal blocks in \eqref{eq:central-block} by
\begin{align}
 A := \big(\tfrac12 - h\big) I - M,
 \qquad
 B := \big(\tfrac12 + h\big) I - K.
 \label{eq:A-B}
\end{align}
Positive-semidefiniteness \eqref{eq:central-block} immediately implies that $A,B \succeq 0$, but actually both these blocks are positive \emph{definite}. Indeed, suppose $Au=0$. Then for $w = (u,0)$ we have $\langle w, (H-hJ)w \rangle = 0$. By \eqref{eq:central-block} we must therefore have $w = 0$ and hence $u = 0$. An analogous argument proves $B\succ0$. 

The Schur complement of $A$ is thus well-defined and its positive-semidefiniteness gives
\begin{align}
 B \succeq \tfrac14 A^{-1}.
 \label{eq:B-schur}
\end{align}

Now define, for any fixed $\lambda > 0$,
\begin{align}
 X := \tfrac{1+\lambda^{-1}}2 I - h I - M = A + \tfrac1{2\lambda} I,
 \qquad
 Y := \tfrac{1+\lambda}2 I + h I - K = B + \tfrac\lambda2 I.
 \label{eq:X-Y}
\end{align}
Notice that, in particular, $X,Y \succ 0$.  We shall later need the following estimate.
\begin{align}
 \lambda^{-1}X^{-1} + \lambda Y^{-1} \preceq 2 I.
 \label{eq:inverse-order}
\end{align}
Indeed, by \eqref{eq:B-schur} and the order-reversing property of inversion,
\begin{align*}
 \lambda^{-1}X^{-1} + \lambda Y^{-1} 
 \preceq \lambda^{-1}\left(A+\tfrac1{2\lambda}I\right)^{-1} + \lambda\left(\tfrac14 A^{-1} +\tfrac\lambda2 I \right)^{-1}.
\end{align*}
For every scalar $a>0$ we have that
\begin{align}
 \frac{\lambda^{-1}}{a+\tfrac1{2\lambda}} + \frac\lambda{\tfrac1{4a}+\tfrac\lambda2}
 =
 \frac{2}{1+2\lambda a}+\frac{4\lambda a}{1+2\lambda a} = 2.
 \label{eq:resolvent-scalar}
\end{align}
Functional calculus for the positive operator $A$ therefore yields \eqref{eq:inverse-order}.

We shall also need another estimate, which comes directly from the original feasibility inequality. For unit $u,v$, \eqref{eq:X-Y} and \eqref{eq:product-feasibility} give
\begin{align}
 Y_u+X_v
 &=\frac{(1+\lambda)^2}{2\lambda}-K_u-M_v \geq
 \frac{\lambda^2+\lambda+1}{2\lambda} + \frac12|\langle u, v\rangle|^2.
 \label{eq:XY-product}
\end{align}

\medskip
\noindent\textbf{Step 3.}
Define another auxiliary Hermitian operator
\begin{equation}
 Q:=\frac{\lambda^2-1}{2\lambda}I + \frac{1+\lambda}{4\lambda^2}X^{-1} - \frac{\lambda(1+\lambda)}4 Y^{-1}.
 \label{eq:Q}
\end{equation}
We claim that, for all unit $u,v \in \sC^n$,
\begin{align}
 X_v+Q_u & \geq \frac{1+\lambda}{2\lambda}|\langle u, v\rangle|^2,
 \label{eq:interpolation-concl-1}\\
 Y_u-Q_v & \geq \frac{1+\lambda}{2}|\langle u, v\rangle|^2.
 \label{eq:interpolation-concl-2}
\end{align}
Observe that it is enough to prove \eqref{eq:interpolation-concl-1}. Indeed, simultaneously interchanging
\begin{align*}
 \lambda\longmapsto\lambda^{-1}, \qquad X\longleftrightarrow Y, \qquad u\longleftrightarrow v
\end{align*}
preserves \eqref{eq:inverse-order} and \eqref{eq:XY-product}, changes $Q$ into $-Q$, and transforms \eqref{eq:interpolation-concl-1} into \eqref{eq:interpolation-concl-2}.

In order to prove \eqref{eq:interpolation-concl-1}, fix unit $u,v$ and set
\begin{align*}
s:=X_v, \qquad t:=|\langle u, v\rangle|^2.
\end{align*}
Notice that the Cauchy--Schwarz inequality gives
\begin{align}
t = |\langle X^{-1/2}u, X^{1/2}v \rangle|^2 \leq (X^{-1})_u X_v = s(X^{-1})_u.
\label{eq:CS}
\end{align}
Taking expectations of \eqref{eq:inverse-order} and \eqref{eq:Q} with respect to $u$, we can write that
\begin{align*}
 X_v + Q_u - \frac{1+\lambda}{2\lambda}t & = s + \frac{\lambda^2 - 1}{2\lambda} + \frac{1+\lambda}{4}\left( \lambda^{-2}(X^{-1})_u - \lambda (Y^{-1})_u \right) - \frac{1+\lambda}{2\lambda}t
 \\
 & \geq s-\frac{1+\lambda}{2\lambda}(1+t) + \frac{(1+\lambda)^2}{4\lambda^2}\frac{t}{s}
 \\
 & = \frac{\big(2\lambda s-(1+\lambda)\big) \big(2\lambda s-(1+\lambda)t\big)}{4\lambda^2s}
\end{align*}
where in the inequality we have also used \eqref{eq:CS} to bound $(X^{-1})_u$ from below. Observe that the rightmost expression is nonnegative if $s\leq\tfrac{1+\lambda}{2\lambda}t$ or $s\geq\frac{1+\lambda}{2\lambda}$. Thus, from now on we can assume that
\begin{align}
 \frac{1+\lambda}{2\lambda}t < s < \frac{1+\lambda}{2\lambda}.
 \label{eq:s-range}
\end{align}
In this case, we need a more careful approach.

First, notice that \eqref{eq:inverse-order} implies $X^{-1} \preceq 2\lambda I$, and hence
\begin{align}
 s\geq\frac1{2\lambda}.
 \label{eq:s-lower}
\end{align}
Second, let us fix $v$ and interpret \eqref{eq:XY-product} as the operator inequality
\begin{align*}
 Y \succeq \left(\frac{1+\lambda+\lambda^2}{2\lambda}-s\right) I + \frac12 P_v = \left(\frac{1+\lambda+\lambda^2}{2\lambda}-s\right)(I - P_v) + \left(\frac{(1+\lambda)^2}{2\lambda}-s\right) P_v.
\end{align*}
Its right-hand side is positive definite by \eqref{eq:s-range}. After inversion and evaluation with respect to $u$, we obtain
\begin{align}
 (Y^{-1})_u \leq \frac{1-t}{\frac{1+\lambda+\lambda^2}{2\lambda}-s} + \frac{t}{\frac{(1 + \lambda)^2}{2\lambda}-s}.
 \label{eq:Y-1}
\end{align}
This time, taking the expectation of \eqref{eq:Q} with respect to $u$ allows us to write, with the help of \eqref{eq:CS} and \eqref{eq:Y-1},
\begin{align*}
& X_v+Q_u-\frac{1+\lambda}{2\lambda}t = s + \frac{\lambda^2 - 1}{2\lambda} + \frac{1+\lambda}{4}\left( \lambda^{-2}(X^{-1})_u - \lambda (Y^{-1})_u \right) - \frac{1+\lambda}{2\lambda}t
 \\
 & \geq s + \frac{\lambda^2 - 1}{2\lambda} + \frac{(1+\lambda)t}{4\lambda^2 s} - \frac{(1+\lambda)\lambda}{4}\left( \frac{1-t}{\frac{\lambda^2+\lambda+1}{2\lambda}-s} + \frac{t}{\frac{(\lambda + 1)^2}{2\lambda}-s} \right) - \frac{1+\lambda}{2\lambda}t
\\
& = \left(1-\frac{(1+\lambda)t}{2\lambda s}\right)
 \frac{\left(s-\frac1{2\lambda}\right)
 \left(\frac{1+\lambda}{2\lambda}-s\right)}
 {\frac{1+\lambda+\lambda^2}{2\lambda}-s} 
 +
 \frac{(1+\lambda)t}{2\lambda s}
 \frac{\frac\lambda2
 \left(\frac{1+\lambda}{2\lambda}-s\right)^2}
 {\left(\frac{(1+\lambda)^2}{2\lambda}-s\right)
 \left(\frac{1+\lambda+\lambda^2}{2\lambda}-s\right)}.
\end{align*}
The expression on the right-hand side is affine in $\tfrac{(1+\lambda)t}{2\lambda s}$, which on the strength of \eqref{eq:s-range} lies in $[0,1)$.  The right-hand side is therefore a convex combination of two nonnegative quantities: all denominators are positive by \eqref{eq:s-range}, and the numerators are nonnegative by \eqref{eq:s-range} and \eqref{eq:s-lower}. This proves \eqref{eq:interpolation-concl-1}, and the symmetry argument proves \eqref{eq:interpolation-concl-2}.

\medskip
\noindent\textbf{Step 4.}
We are finally ready to construct the Hermitian operator $L$ asserted in the lemma's statement. Namely, define
\begin{align}
 L := Q-h\Id
 =
 \left(\frac{\lambda^2-1}{2\lambda}-h\right)\Id
 +\frac{1+\lambda}{4\lambda^2}X^{-1}
 -\frac{\lambda(1+\lambda)}4Y^{-1}.
 \label{eq:explicit-interpolating-operator}
\end{align}
From \eqref{eq:X-Y} we have that
\begin{align*}
 K=\left(\frac{1+\lambda}{2}+h\right)I - Y,
 \qquad
 M=\left(\frac{1+\lambda^{-1}}2-h\right)I - X.
\end{align*}
Thus \eqref{eq:interpolation-concl-2} gives
\begin{align*}
 K_x+L_y &= \frac{1+\lambda}{2} - Y_x + Q_y \leq \frac{1+\lambda}{2} \left(1-|\langle x,y \rangle|^2\right),
\end{align*}
while \eqref{eq:interpolation-concl-1} gives
\begin{align*}
 -L_y+M_z &= \frac{1+\lambda^{-1}}2 - X_z - Q_y \leq \frac{1+\lambda^{-1}}2 \left(1-|\langle y,z \rangle|^2\right).
\end{align*}
These are exactly the desired inequalities \eqref{eq:interpolating-1} and \eqref{eq:interpolating-2}.
\end{proof}

\section{Proof of the triangle inequality}
\label{sec::6}

\begin{proof}[Proof of Theorem~\ref{thm:main}]
Fix $\rho,\tau,\sigma\in\cD_n$. If $d_{\mathrm{sep}}(\rho,\tau) = 0$, then Proposition~\ref{prop:elementary-axioms} gives $\rho=\tau$
and the asserted triangle inequality is trivially true. Similarly, if $d_{\mathrm{sep}}(\tau,\sigma) = 0$, then $\tau=\sigma$ and the assertion is again trivially true. We may therefore assume throughout the rest of the proof that $d_{\mathrm{sep}}(\rho,\tau), d_{\mathrm{sep}}(\tau,\sigma) > 0$.

With the above in mind, put $\lambda := d_{\mathrm{sep}}(\tau,\sigma)/d_{\mathrm{sep}}(\rho,\tau)$ and let $K,M\in H_n$ be arbitrary operators satisfying \eqref{eq:product-feasibility}. Lemma~\ref{lem:interpolation} supplies an interpolating operator $L\in\Herm(n)$ satisfying
\eqref{eq:interpolating-1}--\eqref{eq:interpolating-2}.

Let $\Omega_{12}\in\Gammasep(\rho,\tau)$ and $\Omega_{23}\in\Gammasep(\tau,\sigma)$ be optimal separable couplings whose pure-product decompositions read
\begin{align*}
 \Omega_{12} = \sum\nolimits_i p_iP_{x_i}\otimes P_{y_i}, 
 \qquad
 \Omega_{23} = \sum\nolimits_j q_jP_{y'_j}\otimes P_{z_j}.
\end{align*}
The two decompositions of the middle state $\tau$ need not coincide; this is exactly why Lemma~\ref{lem:interpolation} supplies a single interpolating operator $L$ rather than a common pure-state ensemble.

Multiplying \eqref{eq:interpolating-1} by $p_i$ and summing gives
\begin{align}
 \Tr(K\rho)+\Tr(L\tau)
 &\leq
 \frac{1+\lambda}{2}
 \sum\nolimits_i p_i\left(1-|\ip{x_i}{y_i}|^2\right)\notag\\
 &=(1+\lambda)T(\rho,\tau)
 =(1+\lambda)d^2_{\mathrm{sep}}(\rho,\tau).
 \label{eq:integrated-interpolation-1}
\end{align}
Likewise, summing \eqref{eq:interpolating-2} over the second separable coupling gives
\begin{equation}
 -\Tr(L\tau)+\Tr(M\sigma)
 \leq
 (1+\lambda^{-1})T(\tau,\sigma)
 =(1+\lambda^{-1})d^2_{\mathrm{sep}}(\tau,\sigma).
 \label{eq:integrated-interpolation-2}
\end{equation}
Adding \eqref{eq:integrated-interpolation-1} and \eqref{eq:integrated-interpolation-2} yields:
\begin{align*}
 \Tr(K\rho)+\Tr(M\sigma)
 &\leq
 (1+\lambda)d^2_{\mathrm{sep}}(\rho,\tau)+(1+\lambda^{-1})d^2_{\mathrm{sep}}(\tau,\sigma) = \left(d_{\mathrm{sep}}(\rho,\tau)+d_{\mathrm{sep}}(\tau,\sigma)\right)^2.
\end{align*}
Taking the supremum over all Hermitian $K,M$ satisfying \eqref{eq:product-feasibility} and using Proposition~\ref{prop:duality} yields
\begin{align*}
 T(\rho,\sigma)\leq\left(d_{\mathrm{sep}}(\rho,\tau)+d_{\mathrm{sep}}(\tau,\sigma)\right)^2.
\end{align*}
Taking square roots proves the desired triangle inequality.
\end{proof}

\section{Equivalent formulations and outlook}
\label{sec::7}

Drawing from \cite[Eq. (54)]{TothPitrik26}, define the separable SWAP fidelity by
\begin{align*}
 F_{S,\mathrm{sep}}(\rho,\sigma) := \max_{\Omega\in\Gammasep(\rho,\sigma)}\Tr(S\Omega).
\end{align*}
Since $P_\cA = (I-S)/2$, the optimal transport cost \eqref{eq:def-T-d} be exressed in terms of $F_{S,\mathrm{sep}}$ via
\begin{align*}
 T(\rho,\sigma) = \tfrac12\left(1-F_{S,\mathrm{sep}}(\rho,\sigma)\right), \quad 
\end{align*}
We therefore obtain the following equivalent formulation of our main metricity result.
\begin{corollary}
For every $n$, $\sqrt{1-F_{S,\mathrm{sep}}}$ is a metric on $\cD_n$.
\end{corollary}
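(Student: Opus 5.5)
The corollary will follow directly from Theorem~\ref{thm:main} together with Proposition~\ref{prop:elementary-axioms}, via the identity $T = \tfrac12(1-F_{S,\mathrm{sep}})$. My plan is to first verify that identity. For any $\Omega\in\Gammasep(\rho,\sigma)$ we have $\Tr\Omega=1$, and $P_\cA=(I-S)/2$ gives $\Tr(P_\cA\Omega)=\tfrac12(1-\Tr(S\Omega))$. This map is an affine, decreasing function of $\Tr(S\Omega)$. So minimizing $\Tr(P_\cA\Omega)$ over the compact set $\Gammasep(\rho,\sigma)$ is the same as maximizing $\Tr(S\Omega)$ over that set, and both extrema are attained at the same couplings. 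Hence $T(\rho,\sigma)=\tfrac12(1-F_{S,\mathrm{sep}}(\rho,\sigma))$, and also $F_{S,\mathrm{sep}}\le 1$ since $T\ge0$.

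Next, I will conclude that $\sqrt{1-F_{S,\mathrm{sep}}(\rho,\sigma)}=\sqrt2\,\dsep(\rho,\sigma)$. A positive constant multiple of a metric is a metric. Nonnegativity, symmetry and identity of indiscernibles carry over from Proposition~\ref{prop:elementary-axioms}, and the triangle inequality is inherited from Theorem~\ref{thm:main} after multiplying by $\sqrt2$.

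There is no genuine obstacle. The only point needing care is that the same compact feasible set is used in both optimizations, so that the minimum and the maximum correspond exactly; compactness was already established in Section~\ref{sec::2}.
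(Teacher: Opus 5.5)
Your proposal is correct and follows the paper's route: $P_\cA=(I-S)/2$ together with $\Tr\Omega=1$ gives $T=\tfrac12(1-F_{S,\mathrm{sep}})$, hence $\sqrt{1-F_{S,\mathrm{sep}}}=\sqrt2\,\dsep$, and metricity then follows from Theorem~\ref{thm:main} and Proposition~\ref{prop:elementary-axioms}. You also note explicitly that $F_{S,\mathrm{sep}}\le 1$, which makes the square root real; the paper leaves this implicit.
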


T\'oth and Pitrik also relate this metric to two separable-state modifications of De Palma--Trevisan-type quantum Wasserstein distances denoted $\tilde{D}_{\mathrm{DPT},\mathrm{decomp}}$ and $\tilde{D}_{\mathrm{DPT},\mathrm{sep}}$ \cite[Eqs.~(83) and (90)]{TothPitrik26}. In the special case of a full set of operators upon which these distances are built (cf. \cite[Section C.2]{TothPitrik26}), it is proven that $\tilde{D}^2_{\mathrm{DPT},\mathrm{decomp}} = \tilde{D}^2_{\mathrm{DPT},\mathrm{sep}} = 2(1-F_{S,\mathrm{sep}})$. Thus, the above corollary cetrifies that both $\tilde{D}_{\mathrm{DPT},\mathrm{decomp}}$ and $\tilde{D}_{\mathrm{DPT},\mathrm{sep}}$ are genuine metrics in this case.

There is also a direct consequence for folded optimal transport. Borsoni's folded Kantorovich semidistance $\overline D_2$ associated with $\tfrac{1}{\sqrt{2}}d_{\textup{HS}}$ is precisely the Beatty--Fran\c ca quantity $W_2^{d_{\textup{HS}}/\sqrt{2}}$, whereas the folded Wasserstein distance $D_2$ is its chain envelope~\cite{Borsoni}.  The latter always satisfies $D_2\leq\overline D_2$. Conversely, Theorem~\ref{thm:main} implies that every finite chain $\rho_0,\ldots,\rho_m$ satisfies
\begin{align*}
 \overline D_2(\rho_0,\rho_m)
 \leq\sum_{j=1}^m\overline D_2(\rho_{j-1},\rho_j).
\end{align*}
Taking the infimum over chains gives the reverse inequality.  Hence
\begin{align*}
 D_2 = \overline D_2 = W_2^{d_{\textup{HS}}/\sqrt{2}} = \dsep.
\end{align*}
Thus, for the Hilbert--Schmidt pure-state geometry, passing to the chain envelope does not decrease the one-step separable transport cost.

The above reasoning seems to be specific to the quadratic cost in an essential way: the proof of Lemma~\ref{lem:interpolation} uses a Hermitian block matrix, Schur complements, and resolvents of positive operators. It does not by itself settle the corresponding metricity question for other orders $p$ or for general metrics on projective space.  Nevertheless, Lemma~\ref{lem:interpolation} suggests a broader strategy for such problems: replace primal gluing by an operator-valued interpolation of Hermitian operators. Whether an analogous result holds beyond the quadratic setting or beyond the underlying Hilbert--Schmidt pure-state distance remains open.

\section*{Acknowledgements}

The author acknowledges the assistance of \emph{ChatGPT 5.6 Sol} in the research process. All proofs, citations, and wording were checked, edited, and validated by the author.


\begin{thebibliography}{99}

\bibitem{Bauschke}
H.~H. Bauschke and P.~L. Combettes,
\emph{Convex analysis and monotone operator theory in Hilbert spaces},
CMS Books in Mathematics,
Springer, New York, 2011.
\href{https://doi.org/10.1007/978-1-4419-9467-7}
{doi:10.1007/978-1-4419-9467-7}.

\bibitem{BeattySurvey}
E.~Beatty,
\emph{Wasserstein distances on quantum structures: an overview},
arXiv:2506.09794 (2025).

\bibitem{BeattyFranca}
E.~Beatty and D.~Stilck Fran\c ca,
\emph{Order $p$ quantum Wasserstein distances from couplings},
Ann. Henri Poincar\'e \textbf{27} (2026), 787--845.
\href{https://doi.org/10.1007/s00023-025-01557-z}
{doi:10.1007/s00023-025-01557-z}.

\bibitem{Borsoni}
T.~Borsoni,
\emph{Folded optimal transport and its application to separable
quantum optimal transport}, arXiv:2512.01722v4 (2026).

\bibitem{ColeEtAl}
S.~Cole, M.~Eckstein, S.~Friedland, and K.~\.{Z}yczkowski,
\emph{On quantum optimal transport},
Math. Phys. Anal. Geom. \textbf{26} (2023), Article 14.
\href{https://doi.org/10.1007/s11040-023-09456-7}
{doi:10.1007/s11040-023-09456-7}.

\bibitem{FriedlandEtAl}
S.~Friedland, M.~Eckstein, S.~Cole, and K.~\.{Z}yczkowski,
\emph{Quantum Monge--Kantorovich problem and transport distance
between density matrices},
Phys. Rev. Lett. \textbf{129} (2022), 110402.
\href{https://doi.org/10.1103/PhysRevLett.129.110402}
{doi:10.1103/PhysRevLett.129.110402}.

\bibitem{MeijerEtAl}
T.~J. Meijer, K.~J.~A. Scheres, S.~van den Eijnden, T.~Holicki,
C.~W. Scherer, and W.~P.~M.~H. Heemels,
\emph{A unified non-strict Finsler lemma},
IEEE Control Syst. Lett. \textbf{8} (2024), 1955--1960.
\href{https://doi.org/10.1109/LCSYS.2024.3415473}
{doi:10.1109/LCSYS.2024.3415473}.

\bibitem{Miller26}
T.~Miller,
\emph{Comment on 'Quantum Monge-Kantorovich Problem and Transport Distance between Density Matrices'}, arXiv:2607.07764v1 (2026).

\bibitem{Rockafellar}
R.~T. Rockafellar,
\emph{Convex analysis},
Princeton Mathematical Series, No.~28,
Princeton University Press, Princeton, NJ, 1970.

\bibitem{TothPitrik26}
G.~T\'oth and J.~Pitrik,
\emph{Quantum Wasserstein distance and its relation to several types
of fidelities}, arXiv:2506.14523v6 (2026).

\bibitem{TothPitrik23}
G.~T\'oth and J.~Pitrik,
\emph{Quantum Wasserstein distance based on an optimization over separable states}, Quantum \textbf{7} (2023), 1143.
\href{https://doi.org/10.22331/q-2023-10-16-1143}
{doi:10.22331/q-2023-10-16-1143}.

\bibitem{Uhlmann}
A.~Uhlmann,
\emph{Roofs and convexity},
Entropy \textbf{12} (2010), 1799--1832.
\href{https://doi.org/10.3390/e12071799}
{doi:10.3390/e12071799}.

\bibitem{Wirth}
M.~Wirth,
\emph{Triangle inequality for a quantum Wasserstein divergence},
arXiv:2511.20450 (2026).

\end{thebibliography}
\end{document}